\Crefname{subsection}{subsection}{subsections}
\newtheorem{theorem}{Theorem}[section]
\newtheorem{lemma}[theorem]{Lemma}
\newtheorem{conjecture}[theorem]{Conjecture}
\newtheorem{claim}[theorem]{Claim}
\newtheorem{proposition}[theorem]{Proposition}
\newtheorem{remark}[theorem]{Remark}
\newtheorem{definition}[theorem]{Definition}
\newtheorem*{C1}{Claim~\ref{clm: J1 lower bound}}
\newtheorem*{C2}{Claim~\ref{clm: J0 lower bound}}
\newtheorem*{C3}{Claim~\ref{clm: J1r lower bound}}
\newtheorem*{C4}{Claim~\ref{clm: J0r lower bound}}
\newcommand\restr[2]{{% we make the whole thing an ordinary symbol
  \left.\kern-\nulldelimiterspace % automatically resize the bar with \right
  #1 % the function
  \vphantom{\big|} % pretend it's a little taller at normal size
  \right|_{#2} % this is the delimiter
  }}
\newcommand{\weight}{\mathrm{wt}}
\newcommand{\tweight}{\widetilde{\mathrm{wt}}}
\newcommand{\rweight}{\mathrm{wt}_r}
\newcommand\given[1][]{:}
\renewcommand{\Pr}{\mathop{\bf Pr\/}}
\newcommand{\indicator}{\mathbbm{1}}
\newcommand{\CWE}{C^{\alpha}_{\text{WE}}}
\newcommand{\CWEr}{C^{\alpha,r}_{\text{WE}}}
\newcommand{\CWEs}{C^{\alpha^*}_{\text{WE}}}
\newcommand{\CWErs}{C^{\alpha^*,r}_{\text{WE}}}
\newcommand{\field}{\mathbb{F}}
\newcommand{\F}{\mathbb{F}}
\newcommand{\mmod}[1]{\ \mathrm{mod}\ #1}
\definecolor{MyDarkBlue}{rgb}{0,0.08,1}
\definecolor{MyDarkGreen}{rgb}{0.02,0.6,0.02}
\definecolor{MyDarkRed}{rgb}{0.8,0.02,0.02}
\definecolor{MyDarkOrange}{rgb}{0.40,0.2,0.02}
\definecolor{MyPurple}{RGB}{111,0,255}
\definecolor{MyRed}{rgb}{1.0,0.0,0.0}
\definecolor{MyGold}{rgb}{0.75,0.6,0.12}
\definecolor{MyDarkgray}{rgb}{0.66, 0.66, 0.66}
\date{}
\begin{document}
\title{\textbf{A Deterministic Construction of a Large Distance Code from the Wozencraft Ensemble\footnote{An extended abstract of this manuscript was presented at RANDOM 2023.}}}
\author{
Venkatesan Guruswami\thanks{{Departments of EECS and Mathematics, and the Simons Institute for the Theory of Computing, UC Berkeley, Berkeley, CA, 94709, USA. Email: \url{venkatg@berkeley.edu}. Research supported by a Simons Investigator Award and NSF grants CCF-2210823 and CCF-2228287.}}
\and 
Shilun Li\thanks{Department of Mathematics, UC Berkeley, Berkeley, CA, 94709, USA. Email: \url{shilun@berkeley.edu}. Research supported by University of California, Berkeley under Berkeley Fellowship.} 
}
\maketitle
\begin{abstract}
We present an explicit construction of a sequence of rate $1/2$ Wozencraft ensemble codes (over any fixed prime field $\F_q$) that achieve minimum distance $\Omega(\sqrt{k})$ where $k$ is the message length. The coefficients of the Wozencraft ensemble codes are constructed using Sidon Sets and the cyclic structure of $\field_{q^{k}}$ where $k+1$ is prime with $q$ a primitive root modulo $k+1$. Assuming Artin's conjecture, there are infinitely many such $k$ for any prime $q$. 
\\
\\
\textbf{Keywords:} Algebraic codes, Pseudorandomness, Explicit Construction, Wozencraft Ensemble, Sidon Sets.
\end{abstract}

\section{Introduction}
The explicit construction of binary error-correcting codes with a rate vs. distance trade-off approaching that of random constructions, i.e., the so-called Gilbert-Varshamov (GV) bound, remains an outstanding challenge in coding theory and combinatorics.

For large $n$, a random binary linear code of rate $R \in (0,1)$, defined for example as the column span of a random matrix $G \in \F_2^{n \times Rn}$, has relative distance $h^{-1}(1-R)$ with high probability, where $h^{-1}(\cdot)$ is the inverse of the binary entropy function. There is a similar GV bound $h_q^{-1}(1-R)$, involving the $q$-ary entropy function, for codes over other finite fields $\F_q$.
%In particular, these codes have distance proportional to the block length.

While explicit constructions meeting the GV bound remain elusive\footnote{Ta-Shma \cite{ta2017explicit} constructed explicit binary codes near the GV bound for low rates. The codes have distance $\frac{1-\epsilon}{2}$ and rate $\Omega(\epsilon^{2+o(1)})$ which is asymptotically close to the rate $\Omega(\epsilon^2)$ guaranteed by the GV bound.}, there are known derandomizations showing that codes drawn randomly from much smaller, structured ensembles can also achieve the GV bound. One of the most classical and famous such ensemble is the Wozencraft ensemble, which consists of codes $\CWE=\{(x,\alpha x):x\in\field_{q^{k}}\}$ as $\alpha$ varies over nonzero elements of the field $\F_{q
^k}$, and one uses some fixed basis to express elements of $\F_{q^k}$ as length $k$ vectors over $\F_q$. Note that each code $\CWE$ has rate $1/2$. The construction of Wozencraft ensemble codes $\CWE$ first appeared in a paper by Massey \cite{massey1963threshold}, who attributed the discovery of these codes to John M. Wozencraft.

It is a standard exercise to show that for most choices of $\alpha$, the code $\CWE$ has distance close to $h_q^{-1}(1/2)$ and thus achieves the GV bound.  Puncturing the Wozencraft ensemble gives codes of higher rates that also meet the GV bound. The property of the Wozencraft ensemble and its punctured variants behind this phenomenon is that every nonzero word appears as a codeword in an equal number of codes in the ensemble (if it appears in any code of the ensemble at all).

Using this property, Justesen~\cite{justesen1972class} in 1972 gave the first strongly explicit asymptotically good binary linear codes, by concatenating an outer Reed–Solomon code with different inner codes drawn from the Wozencraft ensemble for different positions of the Reed-Solomon code. The Justesen construction achieves a trade-off between rate vs. distance called the Zyablov bound for rates at least $0.3$. There are variants of Wozencraft codes which give Zyablov-bound-achieving codes for lower rates as well (see Section~\ref{sec:open}).
In recent years, Wozencraft ensemble codes have also found other varied uses, for example in constructing covering codes of small block length~\cite{potukuchi2020improved} and  write-once-memory codes~\cite{shpilka2012capacity, shpilka2013new}.

Given that for most $\alpha$, the code $\CWE$ meets the GV bound, it is a natural question whether one can find an explicit $\alpha$ for which the code has good distance (even if it doesn't quite meet the GV bound). 
Gaborit and Zemor \cite{gaborit2008asymptotic} showed that it suffices to consider random $\alpha$ in a subset of size $\approx q^k/k$ and used it to show the existence of linear codes which are a factor $k$ larger in terms of size than the Gilbert-Varshamov bound (such a result was shown earlier for general codes in \cite{JiangV04}).

However, it remains an outstanding challenge to find some $\alpha$ in deterministic $\text{poly}(k)$ time for which $\CWE$ has distance $\Omega(k)$. This question is relatively well-known, eg. it received mention in a blog post by Lipton~\cite{lipton_2011}, but has resisted progress. To the best of our knowledge, even an explicit $\alpha$ for which $\CWE$ has distance $k^{\Omega(1)}$ was not known. 

For certain structured fields $\F_{q^k}$ (of which there are an infinite family under Artin's conjecture), we give an explicit construction of $\alpha \in \F_{q^k}$ for which $\CWE$ has distance $\Omega(\sqrt{k})$. We also give an explicit puncturing of these codes to achieve any desired rate $r < 1$, and $\Omega_r(\sqrt{k})$ distance (the constant in the $\Omega()$ depends on $r$). Our theorems are informally stated below.
\begin{theorem}[Informal]
\label{thm: informal}
Fix a prime field $\F_q$ and consider an integer
$k$ such that $k+1$ is prime and $q$ is a primitive root modulo $k+1$. There exist $\alpha^*\in \field_{q^k}$ which can be constructed in deterministic $\text{poly}(k)$ time such that:
\begin{itemize}
    \item $\CWEs$ has distance $\Omega(\sqrt{k})$.
    \item For any $r>\frac{1}{2}$, there is an explicit puncturing of $\CWEs$ with rate at least $r$ and distance $\Omega_r(\sqrt{k}).$
\end{itemize}
\end{theorem}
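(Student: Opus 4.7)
The plan is to exploit the cyclotomic structure $\F_{q^k} \cong \F_q[\zeta]/(1 + \zeta + \cdots + \zeta^k)$, which is available because $q$ is a primitive root modulo the prime $k+1$. In this realization, multiplication by $\zeta$ acts as a cyclic shift on the redundant basis $\{1, \zeta, \ldots, \zeta^k\}$ modulo the single relation $\sum_{i=0}^{k} \zeta^i = 0$. I would introduce a cyclic (reduced) weight $\rweight(y)$ defined as the minimum Hamming weight over all tuples $c \in \F_q^{k+1}$ with $\sum_i c_i \zeta^i = y$, and record the trivial inequality $\rweight(y) \leq \weight(y)$ (the standard basis furnishes one valid representation). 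The construction is $\alpha^* := \sum_{i \in S} \zeta^i$ where $S \subseteq \Z/(k+1)\Z$ is an explicit Sidon set of size $\Theta(\sqrt{k})$, for instance a Singer-type construction computable in $\mathrm{poly}(k)$ time.

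For the first bullet, the distance analysis splits into two cases on the size of $w := \rweight(x)$. If $w > |S|/2$, then $\weight(x) \geq \rweight(x) = \Omega(\sqrt{k})$ and we are done. Otherwise, write $x$ in its minimum redundant representation with support $T$ of size $w$; then $\alpha^* x$ is represented by the cyclic convolution $\indicator_S * c$. The Sidon property bounds the second moment $\sum_p r(p)^2 \leq |S|w + w(w-1)$, where $r(p) = |\{(i,j) \in S \times T : i + j \equiv p\}|$, and standard double-counting yields $|\{p : r(p) = 1\}| \geq w(|S| - w + 1)/2 = \Omega(\sqrt{k}\,w)$, each such $p$ contributing a nonzero entry to $\indicator_S * c$. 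This handles the $s = 0$ term of $\rweight(\alpha^* x) = \min_s \#\{p : (\indicator_S * c)_p \neq s\}$; for $s \neq 0$, at least $(k+1) - |S|w = \Omega(k)$ coordinates of $\indicator_S * c$ are already zero, hence $\neq s$. Combining gives $\weight(\alpha^* x) \geq \rweight(\alpha^* x) = \Omega(\sqrt{k})$.

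For the punctured variant $\CWErs$ of rate $r > 1/2$, I would keep a carefully chosen subset $P \subseteq \{0, \ldots, k-1\}$ of size $|P| = k(1-r)/r$ in the second ($\alpha^* x$) block and rerun the case split. The only nontrivial step is, in the small-$w$ regime, showing $|\{p \in P : r(p) = 1\}| = \Omega_r(\sqrt{k})$ for every adversarial $T$. I expect this to be the main obstacle: the naive inclusion bound $|(S + T) \cap P| \geq |S+T| + |P| - (k+1)$ is vacuous in the relevant regime (since $|S+T| = O(\sqrt{k}\,w) \ll k - |P|$ when $w$ is small), so one must either choose $P$ to exploit equidistribution of the chosen Sidon set modulo large subgroups (forcing $|S \cap (P-t)| = \Omega_r(\sqrt{k})$ for every shift $t$), or tailor $P$ directly to $S$ (for instance as a union of translates of $S$) so that a constant fraction of the uniquely represented positions lies inside $P$ regardless of $T$; one also needs the mild size adjustment $|S| = c(r)\sqrt{k}$ with $c(r)$ chosen small enough that $|S|^2/2 < |P|$ in order to keep the $\beta \neq 0$ subcase of the standard-basis analysis under control when $r$ is close to $1$.
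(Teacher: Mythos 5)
Your approach to the first bullet is essentially the paper's proof in different clothing: same cyclic ring $\F_q[x]/(x^{k+1}-1)$, same Sidon-set support for $\alpha^*$, same second-moment/double-counting of the convolution $\indicator_S * c$, same reduction of the field weight to the cyclic weight. Two small quantitative slips do not affect the asymptotics but are worth flagging. First, $\sum_p r(p)^2 \leq |S|w + 2w(w-1)$, not $|S|w + w(w-1)$: since the Bose--Chowla set has length $\approx d^2 - 1 \approx k$, a nonzero residue modulo $k+1$ can correspond to \emph{two} integer differences of elements of $S$, and the Sidon property over $\Z$ kills each one separately (this is exactly the ``$\leq 2\binom{w}{2}$'' in \Cref{lem: J2-Jw bound}). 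Second, you only need $w \geq 1$ for $x \neq 0$, not a threshold at $|S|/2$; the paper closes the argument with the single inequality $c(k) - w \leq \min\{|J_0|,|J_1|\}$ for all $1 \leq w \leq c(k)$ rather than a case split, though the two are interchangeable.

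For the second bullet you correctly isolate the key target property --- one needs $|S \cap (P - t)| = \Omega_r(\sqrt{k})$ for every shift $t$ --- but neither route you sketch reaches it, and this is where the proposal has a genuine gap. The ``equidistribution modulo large subgroups'' idea cannot work: $k+1$ is prime, so $\Z/(k+1)\Z$ has no nontrivial subgroups. The ``$P$ as a union of translates of $S$'' idea does not obviously control the adversarial $T$, and you do not give a mechanism. The paper's resolution (\Cref{lem: j solutions}) is simpler and does not require the size adjustment $|S| = c(r)\sqrt{k}$ you mention: take $P$ to be the contiguous arc $[(\tfrac{1}{r}-1)k]$. Then $(P - t) \cap [d^2 - 1]$ is an arc, and $S \cap (P-t)$ is a Sidon subset of $S$ of bounded length, so Lindström's theorem (\Cref{thm: lindstrom}) gives the upper bound $|S \cap (P-t)| \leq \sqrt{|P|} + |P|^{1/4} + 1$; applying the same bound to the complementary arc and using $|S| = d$ gives the matching lower bound. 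This ``any arc intersects a near-maximal Sidon set in close to the expected number of points'' phenomenon is the quasirandomness you were reaching for, but its source is the Lindström bound on Sidon sets in an interval, not subgroup structure. Without this ingredient (or some substitute), the second bullet of your proposal remains open.
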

Please refer to \Cref{thm: rate 1/2 construction} and \Cref{thm: higher rate construction} for construction of $\alpha^*$ and choice of puncturing.

%far as we understand, none of the previous work explicitly constructs the coefficient $\alpha$ such that $\CWE$ has decent distance nor gave a set of size $\mathrm{poly}(k)$ such that most of the $\alpha$ in the set have decent distance. 

%Wozencraft ensemble codes have distance achieving the Gilbert-Varshamov bound for most choices of $\alpha$ chosen in $\field_{q^{k}}^*$ which has size $q^k-1$. It is of common interest to derandomize the choice of $\alpha$. 

% \venkat{Would be good to state informal theorems}
% In \Cref{section 1/2}, we present a deterministic approach to construct Wozencraft ensemble codes $\CWE$ with rate $1/2$ and minimum distance $\Omega(\sqrt{k})$. Given any rate $r>1/2$ , in \Cref{section high rate} we present deterministic construction of punctured Wozencraft ensemble codes $\CWEr$ with rate $r$ and minimum distance $\Omega((1-\sqrt{2-\frac{1}{r}})\sqrt{k})$.

\section{Preliminaries}
\label{section notation}
Throughout this paper, we will assume the alphabet has size $q$, where $q$ is prime. Furthermore, we will assume $k'$ is a prime such that $q$ is a primitive root modulo $k'$. A widely believed conjecture by Artin stats the following \cite{heath1986artin}:
\begin{conjecture}[Artin's Conjecture on Primitive Roots]
For any integer $m$ that is neither a square number nor $-1$, it is a primitive root modulo infinitely many primes $p$. Moreover, the set of prime numbers $p$ such that $m$ is a primitive root modulo $p$ has a positive asymptotic density inside the set of all primes.
\end{conjecture}
Assuming Artin's conjecture, such $k'$ exists infinitely often at sufficiently high density and can be efficiently found in deterministic $\mathrm{poly}(k')$ time.

\begin{remark}
    Heath proved a weaker density version of Artin's conjecture that holds for all but at most two primes \cite{heath1986artin}. This result directly leads to an unconditional conclusion: for all but at most two choices of the prime alphabet size $q$, primes $k'$ such that $q$ is a primitive root modulo $k'$ exists infinitely often at sufficiently high density and can be efficiently found in deterministic $\mathrm{poly}(k')$ time.
\end{remark}

Denote $k = k'-1$ for ease of notation. Let $p(x) = 1+x+x^2 + \ldots + x^{k'-1}$ be the cyclotomic polynomial which is irreducible over $\F_q$ (see \Cref{prop:irred-poly}). Then $\F_q[x]/(p(x))\cong \F_{q^k}$ for the extension field $\F_q[x]/(p(x))$ and we will fix the representation of $\F_{q^k}$ as polynomials in $x$ of degree less than $k$, with operations performed modulo $p(x)$. We will fix the $\F_q$-linear isomorphism $\varphi : \F_{q^k} \to \F_q^k$ that maps a polynomial of degree less than $k$ to its coefficient vector.  That is, $\varphi(\sum_{i=0}^{k-1} a_i x^i) = (a_0,a_1,\dots,a_{k-1})$. For $\alpha \in \F_{q^k}$, define $\weight(\alpha)$ to be the Hamming weight of $\varphi(\alpha)$. 

The Wozencraft ensemble is a classic family of codes defined as follows.

\begin{definition}

For $\alpha \in \F_{q^k}$, the \textbf{Wozencraft ensemble code} $\CWE$ parameterized by $\alpha$ is given by
$$\CWE=\{(\varphi(x),\varphi(\alpha x)):x\in \field_{q^{k}}\} \ . $$
Note that $\CWE$ is an $\F_q$-linear code of rate $1/2$.
\end{definition}

\begin{proposition}
\label{prop:irred-poly}
    If $k'$ is a prime such that $q$ is a primitive root modulo $k'$, then $p(x)=\sum_{i=0}^{k'-1}x^i$ is an irreducible polynomial of degree $k'-1$ in $\field_q[x]$.
\end{proposition}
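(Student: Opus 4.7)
The plan is to prove the proposition by computing, for a primitive $k'$-th root of unity $\zeta$ in an algebraic closure of $\F_q$, the degree $[\F_q(\zeta):\F_q]$, and showing it equals $k'-1$. Since $p(x)$ has degree $k'-1$ and annihilates $\zeta$, this will force the minimal polynomial of $\zeta$ over $\F_q$ to equal $p(x)$ up to a unit, proving irreducibility.

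First I would observe that $p(x)(x-1) = x^{k'}-1$, so the roots of $p$ in the algebraic closure $\overline{\F_q}$ are exactly the $k'$-th roots of unity other than $1$. Because $k'$ is prime, every such root is primitive of order $k'$. Since $\gcd(k',q)=1$ (as $q$ is a primitive root modulo $k'$, hence invertible mod $k'$), these roots are distinct and $p$ is separable.

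Next I would characterize which finite extensions of $\F_q$ contain a primitive $k'$-th root of unity. Because $\F_{q^d}^\times$ is cyclic of order $q^d-1$, a primitive $k'$-th root of unity lies in $\F_{q^d}$ if and only if $k'\mid q^d-1$, which is equivalent to $\ord_{k'}(q)\mid d$, where $\ord_{k'}(q)$ denotes the multiplicative order of $q$ in $(\Z/k'\Z)^\times$. The hypothesis that $q$ is a primitive root modulo $k'$ says precisely that $\ord_{k'}(q)=\varphi(k')=k'-1$. Hence the smallest extension of $\F_q$ that contains $\zeta$ is $\F_{q^{k'-1}}$, and so
\begin{equation*}
[\F_q(\zeta):\F_q]=k'-1.
\end{equation*}

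Finally, the minimal polynomial $m_\zeta(x)\in\F_q[x]$ of $\zeta$ has degree $k'-1$ and divides $p(x)$ in $\F_q[x]$, because $p(\zeta)=0$. Since $\deg p=k'-1=\deg m_\zeta$ and both are monic (up to a unit scaling for $m_\zeta$), we conclude $p(x)=m_\zeta(x)$, which is irreducible in $\F_q[x]$.

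The proof is essentially routine given standard finite field theory; the only nontrivial ingredient is the translation between ``primitive root modulo $k'$'' and ``the order of $q$ in $(\Z/k'\Z)^\times$ equals $k'-1$,'' and then the standard fact that $\F_{q^d}$ contains a primitive $k'$-th root of unity precisely when that order divides $d$. There is no real obstacle to overcome beyond stating these facts cleanly.
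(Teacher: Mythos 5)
Your proof is correct and takes essentially the same route as the paper: both identify that $q$ having order $k'-1$ modulo $k'$ forces a primitive $k'$-th root of unity $\zeta$ to generate $\F_{q^{k'-1}}$ over $\F_q$, so its minimal polynomial has degree $k'-1$ and must therefore coincide with $p(x)$. The extra remarks you include (the factorization $p(x)(x-1)=x^{k'}-1$ and separability) are fine but not needed for the argument.
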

\begin{proof}
    Since $q$ is a primitive root modulo $k'$, $k'-1$ is the smallest integer $d$ satisfying $q^d\equiv 1\mmod{k'}$. Note that a field extension $\field_{q^d}$ of $\field_q$ contains a primitive $k'$-th root of unity $\zeta$ if and only if $k'|q^d-1$, i.e., $q^d\equiv 1\mmod{k'}$. So $\field_{q^{k'-1}}$ is the smallest field extension of $\field_q$ containing $\zeta$ and the mimimal polynomial of $\zeta$ has degree $k'-1$. Since $p=\sum_{i=0}^{k'-1}x^i$ is a degree $k'-1$ polynomial such that $p(\zeta)=0$, $p$ is the minimal polynomial of $\zeta$ and thus irreducible. 
\end{proof}

We will use Sidon sets \cite{sidon1932satz, babcock1953intermodulation} to construct the parameter $\alpha$ for $\CWE$.
\begin{definition}
A \textbf{Sidon set} is a set of integers $A=\{a_1,\ldots,a_d\}$ where $a_1<a_2<\ldots<a_d$ such that for all $i,j,k,l\in [d]$ with $i\neq j$ and $k\neq l$,
$$a_i-a_j=a_k-a_l \Longleftrightarrow i=k\text{ and }j=l.$$
A \textbf{Sidon set modulo n} is a \textbf{Sidon set} such that for all $i,j,k,l\in [d]$ with $i\neq j$ and $k\neq l$,
$$a_i-a_j\equiv a_k-a_l \pmod{n} \Longleftrightarrow i=k\text{ and }j=l.$$
Size $d$ of the Sidon set $A$ is referred to as its \textbf{order} and $a_d-a_1$ as its \textbf{length}.
\end{definition}

\begin{remark}
For any Sidon set with order $d$ and length $m$, the $\binom{d}{2}$ distances between each pair of points need to be distinct. So $m\geq \binom{d}{2}$ and this gives a trivial upper bound $d\leq \sqrt{2m}$. Erd\H{o}s and Tur\'{a}n proved an upper bound of $d\leq \sqrt{m}+O(m^{1/4})$ \cite{erdos1941problem}, with an alternative proof by Lindstr{\"o}m \cite{lindstrom1972b2}. Both proofs in fact give a sharper upper bound of $\sqrt{m}+m^{1/4}+1$ by carefully examining the inequalities in their proofs \cite{obryant2022sizefinitesidonsets}. On the other hand, it is believed that the maximal $d$ given $m$ satisfies $d>\sqrt{m}$ \cite{dimitromanolakis2002analysis}. Erd\H{o}s and Tur\'{a}n\cite{erdos1941problem} gave explicit constructions where $d\geq \sqrt{m}-O(m^{5/16})$.
\end{remark}
We will introduce the Bose-Chowla construction of Sidon sets \cite{bose1960theorems, JIMSIMS151305}.
\begin{theorem}[Bose-Chowla, \cite{bose1960theorems}]
    Let $q$ be a power of a prime, $\theta$ be a primitive root in $\field_{q^2}$. Then the sequence of $q$ integers
    $$A=\{a:1\leq a < q^2\text{ and }\theta^a - \theta\in \field_{q^2}\cap \field_{q}\}$$
    forms a Sidon set modulo $q^2-1$.
\end{theorem}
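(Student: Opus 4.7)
The plan is to interpret the defining condition $g^i + g^{pi} = 1$ as a trace condition over $\F_{p^2}/\F_p$ and then exploit multiplicative symmetry to verify the Sidon property.

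First, I would observe that the Frobenius automorphism of $\F_{p^2}/\F_p$ sends $\beta \mapsto \beta^p$, so the trace is $\mathrm{Tr}(\beta) = \beta + \beta^p$. The defining condition for $i \in A$ is thus $\mathrm{Tr}(g^i) = 1$. Since $\mathrm{Tr}$ is a surjective $\F_p$-linear map $\F_{p^2} \to \F_p$, there are exactly $p$ elements of $\F_{p^2}$ with trace $1$; all of them are nonzero, so they correspond to exactly $p$ distinct exponents $i \in [p^2 - 2]$. In particular $|A| = p$.

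For the Sidon property modulo $p^2-1$, suppose $a_i - a_j \equiv a_k - a_l \pmod{p^2-1}$ with $i \neq j$ and $k \neq l$. Rewriting this as $a_i + a_l \equiv a_j + a_k \pmod{p^2-1}$ and setting $\alpha = g^{a_i}$, $\beta = g^{a_l}$, $\gamma = g^{a_j}$, $\delta = g^{a_k}$, it becomes the multiplicative identity $\alpha\beta = \gamma\delta$, where each of $\alpha,\beta,\gamma,\delta$ lies in the trace-$1$ fiber. The key step is to deduce that $\alpha + \beta = \gamma + \delta$ as well. Using $\alpha^p = 1 - \alpha$ and $\beta^p = 1 - \beta$,
$$(\alpha\beta)^p = (1-\alpha)(1-\beta) = 1 - (\alpha+\beta) + \alpha\beta,$$
so $(\alpha\beta)^p - \alpha\beta = 1 - (\alpha+\beta)$, and similarly for $\gamma\delta$. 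Since $\alpha\beta = \gamma\delta$, the two right-hand sides agree, yielding $\alpha + \beta = \gamma + \delta$.

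Hence $\{\alpha,\beta\}$ and $\{\gamma,\delta\}$ are the root-sets of the same monic quadratic $X^2 - (\alpha+\beta)X + \alpha\beta$ over $\F_{p^2}$. Therefore either $\alpha = \gamma$ and $\beta = \delta$, which forces $a_i = a_j$ (hence $i = j$), contradicting the hypothesis, or $\alpha = \delta$ and $\beta = \gamma$, which gives $a_i = a_k$ and $a_l = a_j$, i.e.\ $i = k$ and $j = l$, as required. I do not anticipate a substantive obstacle; the only slightly nonroutine move is the passage from the multiplicative coincidence $\alpha\beta = \gamma\delta$ to the additive one $\alpha+\beta = \gamma+\delta$ via applying Frobenius to the product.
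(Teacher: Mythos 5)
The paper cites this theorem from Bose--Chowla \cite{bose1960theorems} without reproducing a proof, so there is no in-paper argument to compare against. Your proof is correct and complete.

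For context: the textbook route to Bose--Chowla usually parametrizes the set as $\{a : g^a = \theta + c,\ c \in \F_p\}$ for a fixed $\theta \in \F_{p^2}\setminus\F_p$, and deduces $\{c_1,c_2\}=\{c_3,c_4\}$ by comparing coordinates of $(\theta+c_1)(\theta+c_2)=(\theta+c_3)(\theta+c_4)$ in the $\F_p$-basis $\{1,\theta\}$. The paper states the construction via the trace condition $g^i + g^{pi}=1$, whose fiber is a coset $\theta_0+\ker\mathrm{Tr}$; this coincides with a coset $\theta+\F_p$ only in characteristic $2$, so some translation is needed to run the textbook argument verbatim. Your route --- extracting $\alpha+\beta=\gamma+\delta$ from $\alpha\beta=\gamma\delta$ by applying Frobenius to the product and using $\alpha^p=1-\alpha$, then invoking uniqueness of the root multiset of $X^2-(\alpha+\beta)X+\alpha\beta$ --- works directly from the stated condition and is the cleaner argument for this formulation. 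Both approaches converge at the same quadratic-roots step, and both are correct.
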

This construction of Sidon set has order $d=q$ and length at most $m(d)=q^2-2$. They are asymptotically optimal in the sense that $\lim_{d\rightarrow \infty}\sqrt{m(d)}/d=1$. Given $q$, such construction can be done in $O(q^4)$ time by finding a primitive root $\theta$ via naive search and checking the inclusion $\theta^a - \theta\in \field_{q^2}\cap \field_{q}$ for every $a$. We will later define our parameter $\alpha$ to be $\sum_{a\in A}x^a$ where $A$ is a Sidon set and show that $\CWE$ has good distance.

\section{Rate 1/2 Construction}
\label{section 1/2}
In this section, we will give explicit construction of rate $1/2$ Wozencraft ensemble codes with minimum distance $\Omega(\sqrt{k})$ using Sidon sets. To begin, we provide an intuitive explanation for the natural occurrence of Sidon sets in this specific context. Subsequently, we proceed with the analysis of the minimum distance of our construction.

\subsection{Motivation}
Fix a set of indices $A\subseteq [k]$ and an element $\alpha=\sum_{a\in A}x^a\in \field_{q^k}$ with coefficients either $0$ or $1$. Take any $y=\sum_{s\in S}b_s x^s$ where $S$ is the set of non-zero indices of the coefficients of $y$, so $b_s\neq 0$ for all $s\in S$. To establish a lower bound on $\Delta(\CWE)$, where $\Delta(\CWE)$ denotes the distance of the code $\CWE$, we would like show a lower bound on the weight of the product $\alpha y\in \field_{q^k}$ for any $y$. To simplify the analysis, we consider a ring extension of $\field_{q^k}$ (described in \cref{sec: proof of rate 1/2}), in which the coefficient $c_j$ in front of $x^j$ of the product $\alpha y$ can be expressed as $c_j=
\sum_{a\in A,s\in S}\indicator\{a+s\equiv j\pmod{k'}\}b_s$. We would like to establish a lower bound on the number of non-zero coefficients $c_j\neq 0$, which would transform into a lower bound on the weight of $\alpha y$ in $\field_{q^k}$. It is sufficient to demonstrate the existence of numerous choices of $j$ satisfying $j\equiv a+s \pmod{k'}$ for a unique combination of $a$ and $s$. In this case, $c_j$ corresponds to the sum of a single non-zero element and is therefore non-zero. To ensure there are an abundance of such choices for $j$ with this uniqueness property, it is desirable to minimize collisions of the form $a+s\equiv a'+s'\pmod{k'}$ where $a,a'\in A$ and $s,s'\in S$. Since $S$ can be selected adversarially with respect to $A$, it is advantageous to have $(a-a')\mod{k'}$ be unique, which is exactly the property of Sidon sets modulo $k'$. The result of this construction is stated formally as follows, with the proof provided in \cref{sec: proof of rate 1/2}:
\begin{theorem}
\label{thm: rate 1/2 construction}
    Fix a prime field $\F_q$ and consider an integer
$k$ such that $k+1$ is prime and $q$ is a primitive root modulo $k+1$. Let $d$ be the largest prime smaller than $\sqrt{k}$ and let $A=\{a_1,\ldots,a_d\}$ be a Bose-Chowla Sidon set with order $d$. Define $\alpha^*=\sum_{a\in A} x^{a}\in\field_{q^k}$. Then $\Delta(\CWEs)\geq \frac{d}{2}=\Omega(\sqrt{k})$.
\end{theorem}
\begin{remark}
\label{remark d}
Since there exists a prime between $[\frac{1}{2}\sqrt{k},\sqrt{k}]$ by Bertrand–Chebyshev theorem \cite{vcebyvsev1850memoire}, $d$ can be found efficiently via naive search and $d\geq \frac{1}{2}\sqrt{k}$. Moreover, Baker, Harman and Pintz \cite{baker2001difference} showed that there exists a prime in the interval $[\sqrt{k}-k^{0.27}, \sqrt{k},]$ for $k$ sufficiently large. So $d=(1-o(1))\sqrt{k}$ and the constructed code $\CWEs$ has distance asymptotically $\Delta(\CWEs)\geq (1-o(1))\sqrt{k}$ as $k\rightarrow\infty$.
\end{remark}

It is also worthwhile to note that when $a+s\equiv j\pmod{k'}$ holds for more than one pair of $(a,s)$, $c_j$ may still be non-zero as it is a sum of multiple non-zero elements. In fact, for $\alpha$ with large weight, it is common that $\alpha y$ has substantial weight yet few choices of $j$ satisfy the uniqueness property. One possible approach to improving the construction of $A$ involves analyzing scenarios where $j\equiv a+s \pmod{k'}$ for multiple pairs of $(a,s)$.

\subsection{Proof of \Cref{thm: rate 1/2 construction}}
\label{sec: proof of rate 1/2}

To analyze the minimum distance of $\CWE$, it is helpful to define the ring $R=\field_q[x]/(x^{k'}-1)$, which consists of polynomials of degree less than $k'=k+1$. Recall that $p(x)=1+x^2+\cdots+x^{k'-1}$ divides $x^{k'}-1$, therefore we can identify $\field_{q^k}\cong R/(p)$ by the map sending $f\in R$ to $(f\mmod{p})\in\field_{q^k}$.
In addition, we can consider the $\field_q$-vector space embedding $\field_{q^{k}}\subseteq R$ and extend $\varphi$ to the $\field_q$-linear map $\widetilde{\varphi}: R\rightarrow \field_q^{k'}$ mapping polynomials of degree less than $k'$ to its coefficient vector. Define $\tweight(f)$ to be the Hamming weight of $\widetilde{\varphi}(f)$ for any $f\in R$. The following lemma gives the relationship between $\tweight(f)$ and $\weight(f\mmod p)$.

\begin{lemma}
    \label{lem: relation of wt}
    For any $f\in R$, $\weight(f\mmod p)\geq \min\{\tweight(f),k-\tweight(f)\}$.
\end{lemma}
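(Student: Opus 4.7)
The plan is to compute $f \mmod p$ explicitly using the very structured shape of $p$, and then do a short case analysis on the leading coefficient of $f$.

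First, I would write a general $f \in R$ as $f = \sum_{i=0}^{k} c_i x^i$ with $c_i \in \F_q$ (recall $\deg f < k' = k+1$). Since $p(x) = 1 + x + \cdots + x^k$, the reduction rule $x^k \equiv -(1 + x + \cdots + x^{k-1}) \pmod{p}$ gives
$$f \mmod p \;=\; \sum_{i=0}^{k-1} (c_i - c_k)\, x^i.$$
Hence $\weight(f \mmod p)$ is exactly the number of indices $i \in \{0,\ldots,k-1\}$ with $c_i \neq c_k$, while $\tweight(f)$ counts the nonzero entries among all $k+1$ coefficients $c_0, \ldots, c_k$.

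Next, I would split on whether $c_k = 0$. If $c_k = 0$, then $\weight(f \mmod p)$ simply counts the nonzero $c_i$ for $i < k$, which coincides with $\tweight(f)$, and the bound is immediate. If $c_k \neq 0$, set $N := |\{i \in \{0,\ldots,k-1\} : c_i = c_k\}|$, so that $\weight(f \mmod p) = k - N$. Each of those $N$ indices has $c_i \neq 0$, and $c_k$ itself is nonzero, so $\tweight(f) \geq N + 1$. This yields $k - \tweight(f) \leq k - N - 1 < k - N = \weight(f \mmod p)$, and in particular $\weight(f \mmod p) \geq k - \tweight(f) \geq \min\{\tweight(f), k - \tweight(f)\}$.

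I do not expect any real obstacle: the entire argument collapses to a one-line reduction formula followed by a trivial case split. The only minor subtlety is to keep track of the asymmetric role of the coefficient $c_k$, which both shifts every lower coefficient upon reduction and contributes one extra unit to $\tweight(f)$ that $\weight(f \mmod p)$ never sees; once this bookkeeping is clear, the inequality essentially proves itself.
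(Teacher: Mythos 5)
Your proof is correct and follows essentially the same route as the paper: reduce $f$ modulo $p$ via $x^k \equiv -(1+x+\cdots+x^{k-1})$ to obtain $f \mmod p = \sum_{i=0}^{k-1}(c_i - c_k)x^i$, then split on whether $c_k = 0$. Your treatment of the $c_k \neq 0$ case spells out the counting a bit more carefully (noting that the $N$ matching indices plus $c_k$ itself all contribute to $\tweight(f)$), but the argument is the same.
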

\begin{proof}
    For any $f\in R$, let us write $f=\sum_{i=0}^k b_i x^i$. Then 
    $$f\mmod p=f-b_k p=\sum_{i=0}^{k-1}(b_i-b_k)x^i.$$
    If $b_k=0$, then $\weight(f\mmod p)=\tweight(f)$; if $b_k\neq 0$, then 
    $$\weight(f\mmod p)=\left|\{i:b_i\neq b_k\}\right|\geq k-\tweight(f).$$
    So $\weight(f\mmod p)\geq \min\{\tweight(f),k-\tweight(f)\}$.
\end{proof}

\begin{remark}
    \Cref{lem: relation of wt} relies crucially on our choice of $p(x)=1+x^2+\cdots+x^{k'-1}$.
\end{remark}

\begin{lemma}
    \label{lem: equivalent condition}
    Given $\alpha\in \field_{q^k}\subseteq R$, suppose that for every $y\in R$ with $\tweight(y)\leq c(k)$ the condition
    $$c(k)-\tweight(y)\leq \tweight(\alpha y)\leq k-\left(c(k)-\tweight(y)\right)$$
    holds, where the product $\alpha y$ is taken in $R$. Then  $\Delta(\CWE)\geq c(k)$.
\end{lemma}
\begin{proof}
Take any non-zero $y\in \field_{q^k}\subseteq R$. Its corresponding codeword $\CWE(y)=(\varphi(y),\varphi(\alpha y\mmod{p}))$ has Hamming weight $\weight(y)+\weight(\alpha y \mmod{p})$. By \Cref{lem: relation of wt}, the above condition implies
$$\weight(y)+\weight(\alpha y\mmod{p})\geq \tweight(y)+\min\{\tweight(\alpha y),k-\tweight(\alpha y)\}\geq c(k).$$
Since $\CWE$ is a linear code, $\Delta(\CWE)\geq c(k)$.
\end{proof}

% \begin{theorem}
% % \label{thm: rate 1/2 construction}
%     Let $d$ be the largest prime smaller than $\sqrt{k}$ and let $A=\{a_1,\ldots,a_d\}$ be a Bose-Chowla Sidon set with order $d$. Define $\alpha^*=\sum_{a\in A} x^{a}\in\field_{q^k}$. Then $\alpha^*$ satisfies the condition of \Cref{lem: equivalent condition} with $c(k)=d$ and thus $\Delta(\CWEs)\geq d=\Omega(\sqrt{k})$.
% \end{theorem}
We can now prove \cref{thm: rate 1/2 construction} by showing that $\alpha^*$ satisfies the condition of \Cref{lem: equivalent condition} with $c(k)=d$.
\begin{proof}(\cref{thm: rate 1/2 construction})
Note that $\alpha^*$ is an element of $\field_{q^k}$ since it has degree at most $k-2$ by construction. Let us check that the condition of \Cref{lem: equivalent condition} indeed holds for $\alpha^*=\sum_{a\in A}x^a$ and $c(k)=\frac{d}{2}$. For any $y\in R$ with $\tweight(y)=w$, we can write $y=\sum_{i=1}^w b_{s_i} x^{s_i}$ where $b_{s_i}\neq 0$ for all $1\leq i\leq w$. We will denote $S=\{s_1,\ldots,s_w\}\subseteq [k']$ the set non-zero coefficient indices of $y$, where we define $[k']=\{0,\ldots,k'-1=k\}$. The coefficients of the product $\alpha^* y=\sum_{j=0}^{k} c_j x^j$ are given by
$$c_j=
\sum_{\substack{a\in A\\s\in S}}\indicator\{a+s\equiv j\pmod{k'}\}b_s.$$
This motivates us to define the shifted set
$$(j-A)_k=\{(j-a_1)\mmod{k'},\ldots,(j-a_d)\mmod{k'}\}$$
which gives us
$$c_j=\sum_{s\in (j-A)_k\cap S}b_{s}.$$
We will denote by
$$J_m=\{j\in [k']\given |(j-A)_k\cap S|=m\}$$
the indices $j$ such that $|(j-A)_k\cap S|$ has size $m$.
It is evident that if $(j-A)_k\cap S=\emptyset$ then $c_j=0$, and if $|(j-A)_k\cap S|=1$ then $c_j\neq 0$. So
$$|J_1|\leq \tweight(\alpha^* y)\leq k-|J_0|.$$
Looking at the conditions of \Cref{lem: equivalent condition}, it would be sufficient to take $c(k)$ such that
$$c(k)-w\leq \min\{|J_0|,|J_1|\}$$
for all $w$.
We make the following claims on lower bounds of $|J_0|$ and $|J_1|$ which will be proven in Section~\ref{sec:proofs-of-claims}.
\begin{claim}
\label{clm: J1 lower bound}
$J_1$ has size at least $wd-2w(w-1)$.
\end{claim}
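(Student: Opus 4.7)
The plan is to prove the claim by a two-step double-counting argument powered by the Sidon structure of $A$. For each $j \in [k']$, let $N_j = |(j-A)_k \cap S|$, so that $|J_m| = |\{j : N_j = m\}|$.

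First, I would compute the first moment of $N_j$ by switching the order of summation: each pair $(a,s) \in A \times S$ contributes $1$ to $N_j$ with $j \equiv a + s \pmod{k'}$, giving
$$\sum_{j \in [k']} N_j = |A|\,|S| = wd.$$
Since $|J_1| = \sum_{j : N_j = 1} N_j$, this rewrites as
$$|J_1| = wd - \sum_{j : N_j \geq 2} N_j,$$
reducing the claim to the bound $\sum_{j : N_j \geq 2} N_j \leq w(w-1)$.

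Next, I would pass from the first to the second moment via the trivial inequality $N_j \leq N_j(N_j-1)$ for $N_j \geq 2$, so that it suffices to prove $\sum_j N_j(N_j-1) \leq w(w-1)$. This sum counts ordered pairs of distinct $(a,s), (a',s') \in A \times S$ with $a + s \equiv a' + s' \pmod{k'}$, equivalently $a - a' \equiv s' - s \pmod{k'}$. If $s = s'$, the congruence forces $a = a'$, so both pairs coincide; this case is excluded. Otherwise $s \neq s'$, contributing $w(w-1)$ ordered choices of $(s,s')$, and for each such choice the Sidon property of $A$ modulo $k'$ allows at most one pair $(a, a') \in A^2$ satisfying the congruence. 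Multiplying yields the desired bound on the collision count and closes the argument.

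The main obstacle, I expect, is confirming that the Bose--Chowla set $A$ indeed acts as a Sidon set modulo $k'$, rather than only modulo its native modulus $d^2 - 1$. Because $A \subseteq [0, d^2 - 2]$ with $d < \sqrt{k}$, the nonzero integer differences of $A$ all lie in a window that is short compared to $k'$, so the ordinary Sidon property in $\mathbb{Z}$ can be transferred to a Sidon-mod-$k'$ statement suitable for the collision count; I would nail down this reduction carefully before invoking it, as it is the single place where the particular choice of $d$ relative to $k$ enters the proof.
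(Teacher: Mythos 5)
Your overall strategy --- first-moment count $\sum_j N_j = wd$, then bound the overcount by a collision/second-moment argument using the Sidon structure of $A$ --- is exactly the paper's strategy. The first-moment step and the reduction to bounding $\sum_{j\colon N_j\ge 2}N_j$ are correct, and the reformulation of $\sum_j N_j(N_j-1)$ as an ordered collision count is a clean way to package the paper's \Cref{lem: J2-Jw bound}. However, there is a genuine gap precisely at the point you flag as uncertain, and your proposed escape hatch does not work.

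The issue is that $A$ is \emph{not} a Sidon set modulo $k'$. You reason that because $A\subseteq[0,d^2-2]$, its integer differences lie in a window short compared to $k'$, so the Sidon property in $\Z$ lifts to a Sidon property mod $k'$. But $d$ is chosen to be the largest prime below $\sqrt{k}$, so $d^2-2$ is close to $k$, and the window $[-(d^2-2),\,d^2-2]$ has length roughly $2k\approx 2k'$. Concretely, for a fixed ordered pair $(s,s')$ with $s<s'$, the congruence $a-a'\equiv s'-s\pmod{k'}$ with $a,a'\in A$ can be satisfied by the integer equation $a-a'=s'-s$ (when $a>a'$) \emph{and} by $a-a'=s'-s-k'$ (when $a<a'$); both lie in the allowed difference range whenever $s'-s\in[k'-d^2+2,\,d^2-2]$, which is a nonempty interval precisely because $d\approx\sqrt{k}$. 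So each ordered pair $(s,s')$ can contribute up to two collisions, not one. This is exactly the dichotomy in the proof of \Cref{lem: J2-Jw bound}, which is where the factor of $2$ in the bound $\sum_m\binom{m}{2}|J_m|\le 2\binom{w}{2}$ comes from. Your argument as written therefore only yields $\sum_j N_j(N_j-1)\le 2w(w-1)$ and hence $|J_1|\ge wd - 2w(w-1)$, which is weaker than the stated claim by a factor of $2$ in the subtracted term. You should replace the appeal to ``Sidon mod $k'$'' with the two-case analysis of \Cref{lem: J2-Jw bound} and then track the resulting constant carefully.
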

\begin{claim}
\label{clm: J0 lower bound}
$J_0$ has size at least $k-wd$.
\end{claim}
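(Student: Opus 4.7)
My plan is to observe that $J_0$ is nothing other than the complement inside $[k']$ of the sumset $(A+S) \bmod k'$, after which the claim reduces to a trivial union bound on the size of that sumset. In particular, I will not need to invoke the Sidon property of $A$ at all; this is in contrast to Claim \ref{clm: J1 lower bound}, where the Sidon condition is essential to guarantee that many residues are hit \emph{exactly} once.

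The first step is to unwind the definitions. By construction, $j \in J_0$ precisely when $(j-A)_k \cap S = \emptyset$, i.e.\ when for every $a \in A$ and every $s \in S$ we have $j - a \not\equiv s \pmod{k'}$. Rearranging, this says $j$ is not of the form $a+s \bmod k'$ for any $(a,s) \in A \times S$. Hence
\[
J_0 \;=\; [k'] \setminus \bigl((A+S) \bmod k'\bigr),
\]
where $(A+S) \bmod k' := \{(a+s) \bmod k' : a \in A,\ s \in S\}$.

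The second step is the union bound $|(A+S) \bmod k'| \leq |A|\cdot|S| = dw$, which is immediate since every element of the sumset corresponds to at least one pair $(a,s) \in A \times S$. Combining with the previous display, we conclude
\[
|J_0| \;\geq\; |[k']| - dw \;=\; (k+1) - wd \;\geq\; k - wd,
\]
which is the desired inequality.

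I do not foresee any real obstacle: the only mild care needed is to translate the notation $(j-A)_k$ into the sumset $(A+S) \bmod k'$ and to remember that $|[k']| = k' = k+1$, which gives us an extra unit of slack over the stated bound. Because the argument is oblivious to the internal structure of $A$, the same inequality would hold for an arbitrary subset $A \subseteq [k']$ of size $d$; the Sidon property is reserved for the companion Claim \ref{clm: J1 lower bound}.
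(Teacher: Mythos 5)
Your proof is correct and takes essentially the same approach as the paper: both observe that $[k']\setminus J_0$ is contained in the sumset $(A+S)\bmod k'$ and apply the trivial bound $|A+S|\leq |A||S|=wd$, then use $|[k']|=k+1\geq k$. Your framing via the sumset complement is just a restatement of the paper's count of indices $j$ with $(j-A)_k\cap S\neq\emptyset$.
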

Assuming the claims, it suffices to take $c(k)$ such that
\begin{align*}
    c(k)\leq& \min_{1\leq w\leq c(k)}wd-2w(w-1)+w
    =\min\{d+1,(d-2c(k)+3)c(k)\},\\
    c(k)\leq& \min_{1\leq w\leq c(k)}k-wd+w
    =k-(d-1)c(k).
\end{align*}
Solving the two inequalities, it suffices to take $c(k)\leq \frac{d}{2}+1$. So $\alpha^*=\sum_{a\in A}x^a$, $c(k)=\frac{d}{2}$ satisfy the condition of \Cref{lem: equivalent condition}.
\end{proof}

\subsection{Proofs of \Cref{clm: J1 lower bound} and \Cref{clm: J0 lower bound}}
\label{sec:proofs-of-claims}
%It remains for us to prove \Cref{clm: J1 lower bound} and \Cref{clm: J0 lower bound}. 
To show a lower bound on $|J_1|$, it is desired that the sets $J_2,..,J_{w}$ are small. The following lemma gives an upper bound on the sizes of $J_2,..,J_{w}$:
\begin{lemma}
\label{lem: J2-Jw bound}
    The sets $J_2,\ldots,J_w$ defined in the proof of \Cref{thm: rate 1/2 construction} satisfy:
    $$\frac{1}{2}\sum_{m=2}^w m|J_m|\leq \sum_{m=2}^w \binom{m}{2}|J_m|\leq 2\binom{w}{2}= w(w-1).$$
\end{lemma}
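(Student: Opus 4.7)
The first inequality is termwise and immediate: for every $m\geq 2$ we have $\binom{m}{2}=m(m-1)/2\geq m-1$, so $\binom{m}{2}|J_m|\geq (m-1)|J_m|$ for each $m$ and the inequality follows by summing. For the main inequality I will use a double-counting argument on triples $(j,s,s')$, with the Bose--Chowla Sidon structure of $A$ entering only at the very last step.

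Rewrite the middle quantity as
\[
\sum_{m=2}^{w}\binom{m}{2}|J_m|=\sum_{j\in[k']}\binom{|(j-A)_k\cap S|}{2},
\]
which counts pairs $(j,\{s,s'\})$ where $\{s,s'\}$ is an unordered pair of distinct elements of $S$ both lying in $(j-A)_k$. Doubling to switch to ordered pairs,
\[
2\sum_{m=2}^{w}\binom{m}{2}|J_m|=\#\bigl\{(j,s,s')\given s,s'\in S,\; s\neq s',\; s,s'\in(j-A)_k\bigr\}.
\]
I then fix an ordered pair $(s,s')\in S\times S$ with $s\neq s'$ and bound the number of compatible $j$. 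The condition $s,s'\in(j-A)_k$ unwinds to $j\equiv s+a\equiv s'+a'\pmod{k'}$ for some $a,a'\in A$, which forces $a\neq a'$ and $a-a'\equiv s'-s\pmod{k'}$; hence the number of valid $j$ equals the number of ordered pairs $(a,a')\in A^2$ with $a\neq a'$ that represent the nonzero residue $s'-s$ modulo $k'$.

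The key step is to show that each nonzero residue mod $k'$ is realized by at most two such ordered pairs. Since $A\subseteq [0,d^2-2]$ is a Sidon set in the integers (by Bose--Chowla), distinct ordered pairs have distinct integer differences, all of which lie in the open interval $(-k',k')$ because $d^2-2<k<k'$. Two distinct integer differences in $(-k',k')$ can agree modulo $k'$ only if they differ by exactly $\pm k'$, which pairs them uniquely, so each residue class receives at most two representations. Consequently each of the $w(w-1)$ ordered pairs $(s,s')$ yields at most two values of $j$, giving $2\sum_m \binom{m}{2}|J_m|\leq 2w(w-1)$ and thus the claimed bound $w(w-1)=2\binom{w}{2}$. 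The main obstacle, and indeed the only delicate step, is this factor-of-two wrap-around: $A$ is only Sidon modulo $d^2-1$ rather than modulo $k'$, so one integer difference and its shift by $k'$ can collapse to the same residue. The size estimate $d^2<k'$ caps this collision multiplicity at two, and this is exactly what forces the right-hand side to be $2\binom{w}{2}$ rather than $\binom{w}{2}$.
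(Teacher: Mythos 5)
Your proof is correct and takes essentially the same route as the paper's. Both arguments hinge on the same key observation: for a fixed (ordered or unordered) pair $s\neq s'$ in $S$, the constraint $\{s,s'\}\subseteq (j-A)_k$ forces $a-a'\equiv s'-s\pmod{k'}$ for some $a,a'\in A$, and since $A$ is a Sidon set in the integers with all differences in $(-k',k')$, at most two integer differences (one positive, one negative, differing by $k'$) can hit that residue class, yielding at most two valid $j$; the paper phrases this via the two cases $a_l-a_{l'}=s'-s$ or $s'-s-k'$, while you phrase it via residue representatives, but the content is identical. Your explicit double-counting setup over ordered triples $(j,s,s')$ is a minor presentational variant of the paper's direct count of pairs per $j$.
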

\begin{proof}
It is evident that the first inequality holds. Let us now show that for any two distinct $s,s'\in S$, we have $\{s,s'\}\subseteq (j-A)_k$ for at most two choices of $j$.
Take any distinct pair $s,s'\in S$. Without loss of generality, assume $s<s'$. Suppose $\{s,s'\}\subseteq (j-A)_k$, then we can write $s\equiv j-a_{l}\pmod{k'}$ and $s'\equiv j-a_{l'}\pmod{k'}$ for some $l,l'\in [d]$. Then $a_l-a_{l'}\equiv s'-s\pmod{k'}$. So 
$$a_l-a_{l'}=\begin{cases}
s'-s & \text{ if }l>l',\\
s'-s-k' & \text{ if } l<l'.
\end{cases}$$
In both cases, by definition of Sidon sets, $l,l'$ are uniquely determined. Thus $j\equiv a_l+s$ is also uniquely determined.

For any $j_m\in J_m$, $|(j_m-A)_k\cap S|=m$ so there are $\binom{m}{2}$ distinct pairs $\{s,s'\}\subseteq (j_m-A)_k$. So the total count of such distinct pairs for all $j\in \cup_{m=2}^w J_m$ is $\sum_{m=2}^ w\binom{m}{2}|J_m|$, which must not exceed $2\binom{w}{2}$ since each pair can only occur in $(j-A)_k$ for two choices of $j$. This gives
$$\sum_{m=2}^w \binom{m}{2}|J_m|\leq 2\binom{w}{2}= w(w-1)$$
which completes the proof.
\end{proof}
We can now obtain a lower bound on the size of $J_1$. 
\begin{C1}
    $J_1$ has size at least $wd-2w(w-1)$.
\end{C1}
\begin{proof}
    Consider the bipartite graph $(S,[k'],E)$ where there is an edge between $s\in S$ and $j\in [k']$ if and only if $j\equiv s+a\pmod{k'}$ for some $a\in A$. Then $J_m$ corresponds to the right vertices $j\in [k']$ with degree $m$. Since each left vertex has degree $|A|=d$, the total number of edges is $|E|=|S|d=wd$. On the other hand, $|E|=\sum_{m=1}^w |J_m|$. Therefore we can write:
    $$|J_1|=|E|-\sum_{m=2}^{w} m|J_m|=wd-\sum_{m=2}^{w} m|J_m|.$$
    The proof is complete via the bound of \Cref{lem: J2-Jw bound}:
    $$|J_1|=wd-\sum_{m=2}^{w} m|J_m|\geq wd-2w(w-1). \qedhere $$
\end{proof}

\begin{C2}
   $J_0$ has size at least $k-wd$.
\end{C2}
\begin{proof}
    Since $s\in (j-A)_k$ is equivalent to $j\in (s+A)_k$, for any $s\in S$, there are $|(s+A)_k|=|A|=d$ values of $j \in [k']$ such that $s\in (j-A)_k$. So there are at most $|S|d=wd$ indices $j$ such that $|(j-A)_k\cap S|>0$. 
\end{proof}

\section{Higher Rate Construction}
\label{section high rate}
In the last section, we gave an explicit construction of rate $1/2$ Wozencraft ensemble code $\CWEs$ achieving minimum distance $\Omega(\sqrt{k})$. In this section, we will show that appropriate puncturing of $\CWEs$ will give us codes of rates $r \in (1/2,1)$ and minimum distance at least $\Omega\left(\left(1-\sqrt{2-\frac{1}{r}}\right)\sqrt{k}\right)$.
\begin{definition}
Let $\CWEr$ be the rate $r$ punctured code given by removing the last $(2-\frac{1}{r})k$ bits of each codeword in $\CWE$.
\end{definition}
\subsection{Analysis of Minimum Distance}
Let $\varphi_r$ denote the map sending any polynomial $f$ to the first $(\frac{1}{r}-1)k$ least significant coefficients. Let $\rweight(f)$ denote the Hamming weight of $\varphi_r(f)$.
\begin{lemma}
    \label{lem: relation of wt_r}
    For any $f\in R$, $\rweight(f\mmod{p})\geq \min\{\rweight(f),(\frac{1}{r}-1)k-\rweight(f)\}$.
\end{lemma}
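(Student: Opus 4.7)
The plan is to mimic the proof of \Cref{lem: relation of wt} essentially verbatim, adapting only the observation that we are tracking a prefix of coefficients rather than all of them. Write $f=\sum_{i=0}^{k} b_i x^i\in R$, and set $m=(\tfrac{1}{r}-1)k$ for the length of the prefix inspected by $\varphi_r$. Since $p(x)=\sum_{i=0}^{k} x^i$, we still have the identity
\[
f\bmod p \;=\; f - b_k\, p \;=\; \sum_{i=0}^{k-1}(b_i-b_k)x^i,
\]
so the first $m$ coefficients of $f\bmod p$ are $b_0-b_k,\,b_1-b_k,\,\dots,\,b_{m-1}-b_k$.

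Now I would split into two cases on whether the ``shift'' $b_k$ is zero. In Case 1, when $b_k=0$, the first $m$ coefficients of $f\bmod p$ coincide with those of $f$, giving $\rweight(f\bmod p)=\rweight(f)$ immediately. In Case 2, when $b_k\neq 0$, a coefficient $b_i-b_k$ (for $i<m$) is nonzero precisely when $b_i\neq b_k$; since $b_k\neq 0$, the set of indices $i\in\{0,1,\ldots,m-1\}$ with $b_i=b_k$ is contained in the set of $i$ with $b_i\neq 0$, and the latter has size at most $\rweight(f)$ by definition. Hence at least $m-\rweight(f)=(\tfrac{1}{r}-1)k-\rweight(f)$ of the first $m$ coefficients of $f\bmod p$ are nonzero.

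Combining the two cases gives the claimed bound $\rweight(f\bmod p)\ge \min\{\rweight(f),\,(\tfrac{1}{r}-1)k-\rweight(f)\}$.

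There is no real obstacle: the argument is essentially the same as in \Cref{lem: relation of wt}, with the one substantive step being the containment ``$\{i<m:b_i=b_k\}\subseteq\{i<m:b_i\neq 0\}$'' which relies on $b_k\neq 0$ and is exactly parallel to the original lemma. Nothing about $m$ being of the form $(\tfrac{1}{r}-1)k$ is used beyond $m\le k$, so the proof goes through uniformly in $r$.
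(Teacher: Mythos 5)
Your proof is correct and matches the paper's argument essentially verbatim: same decomposition $f\bmod p = f - b_k p$, same two cases on $b_k$, and the same counting in the $b_k\neq 0$ case (you just spell out the set containment $\{i<m:b_i=b_k\}\subseteq\{i<m:b_i\neq 0\}$ that the paper states implicitly).
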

\begin{proof}
Writing $f=\sum_{i=0}^k b_i x^i$, we have 
$$f\mmod p=f-b_k p=\sum_{i=0}^{k-1}(b_i-b_k)x^i.$$
If $b_k=0$, then $\rweight(f\mmod p)=\rweight(f)$; if $b_k\neq 0$, then 
$$\rweight(f\mmod p)=\Bigl|\bigl\{i\in[(\frac{1}{r}-1)k] \mid b_i\neq b_k\bigr\}\Bigr|\geq (\frac{1}{r}-1) k-\rweight(f).$$
So $\rweight(f\mmod p)\geq \min\{\rweight(f),(\frac{1}{r}-1)k-\rweight(f)\}$.    
\end{proof}
\begin{lemma}
    \label{lem: equivalent condition r}
    Let $\alpha\in \field_{q^k}\subseteq R$. Suppose for every $y\in R$ with $\tweight(y)\leq c(k)$ the condition
    $$c(k)-\tweight(y)\leq \rweight(\alpha y)\leq (\frac{1}{r}-1)k-\left(c(k)-\tweight(y)\right)$$
    holds, where the product $\alpha y$ is taken in $R$. Then $\Delta(\CWEr)\geq c(k)$.
\end{lemma}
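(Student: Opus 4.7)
The plan is to mirror the unpunctured argument of \Cref{lem: equivalent condition} verbatim, replacing the full weight on the second block with its truncation $\rweight$ and using \Cref{lem: relation of wt_r} in place of \Cref{lem: relation of wt}. The only structural difference from the rate-$1/2$ setting is book-keeping of which coordinates survive the puncturing.

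First I would check that for any nonzero $y \in \F_{q^k}$, the surviving codeword in $\CWEr$ is $(\varphi(y), \varphi_r(\alpha y \mmod p))$, so its Hamming weight equals $\weight(y) + \rweight(\alpha y \mmod p)$. This is because the first $k$ coordinates of the unpunctured codeword are retained in full and the next $(\tfrac{1}{r}-1)k$ coordinates are precisely those read off by $\varphi_r$. I would also record that since $y \in \F_{q^k}$ embeds into $R$ as a polynomial of degree less than $k$, the coefficient of $x^k$ vanishes and $\tweight(y) = \weight(y)$.

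Next I would case-split on $\tweight(y)$. If $\tweight(y) > c(k)$, the first block alone already has weight exceeding $c(k)$. Otherwise the hypothesis provides both $\rweight(\alpha y) \geq c(k) - \tweight(y)$ and $(\tfrac{1}{r}-1)k - \rweight(\alpha y) \geq c(k) - \tweight(y)$; combining these with \Cref{lem: relation of wt_r} yields $\rweight(\alpha y \mmod p) \geq c(k) - \tweight(y)$, so the codeword weight is at least $\tweight(y) + (c(k) - \tweight(y)) = c(k)$. Linearity of $\CWEr$ then upgrades this to the desired distance bound. I do not expect a genuine obstacle: the two-sided bound in the hypothesis is precisely what is needed to activate the $\min$ inside \Cref{lem: relation of wt_r}, so the condition has been stated in exactly the right form for the argument to close.
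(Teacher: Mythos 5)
Your proposal is correct and follows essentially the same path as the paper: express the punctured codeword weight as $\weight(y) + \rweight(\alpha y \bmod p)$, invoke \Cref{lem: relation of wt_r} to lower-bound the second term by $\min\{\rweight(\alpha y), (\tfrac{1}{r}-1)k - \rweight(\alpha y)\}$, feed in the two-sided hypothesis, and finish with linearity. The only difference is that you make the case $\tweight(y) > c(k)$ explicit (where the first block alone already supplies weight $c(k)$, using $\weight(y)=\tweight(y)$ for $y\in\F_{q^k}$); the paper's proof leaves that case implicit, so your version is a slightly more careful rendering of the identical argument.
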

\begin{proof}
Take any non-zero $y\in \field_{q^k}\subseteq R$. Its corresponding codeword $\CWEr(y)=(\varphi(y),\varphi_r(\alpha y\mmod{p}))$ has hamming weight $\weight(y)+\weight_r(\alpha y \mmod{p})$. By \Cref{lem: relation of wt_r}, the above condition implies
\begin{align*}
    &\weight(y)+\weight(\alpha y\mmod{p})\\
    \geq& \tweight(y)+\min\{\rweight(\alpha y),(\frac{1}{r}-1)k-\rweight(\alpha y)\}\\
    \geq& c(k).
\end{align*}

Since $\CWEr$ is a linear code, $\Delta(\CWEr)\geq c(k)$.
\end{proof}
The following theorem establishes a lower bound on the minimum distance of the punctured code $\CWErs$ with $\alpha^*$ constructed using Sidon sets as outlined in \Cref{thm: rate 1/2 construction}.
\begin{theorem}
    \label{thm: higher rate construction}
    For any rate $r>1/2$, using the construction of $\alpha^*$ by \Cref{thm: rate 1/2 construction}, the condition of \Cref{lem: equivalent condition r} is satisfied with $c(k)=\Omega\left(\left(1-\sqrt{2-\frac{1}{r}}\right)\sqrt{k}\right)$. Thus the punctured code $\CWErs$ has minimum distance at least $\Omega\left(\left(1-\sqrt{2-\frac{1}{r}}\right)\sqrt{k}\right)$.
\end{theorem}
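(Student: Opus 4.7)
The plan is to mirror \Cref{thm: rate 1/2 construction}, verifying the hypothesis of \Cref{lem: equivalent condition r} for the same $\alpha^* = \sum_{a\in A}x^a$ but now with $c(k) = \Theta\bigl((1-\sqrt{2-1/r})\sqrt{k}\bigr)$. For $y \in R$ of support $S$ and weight $w = \tweight(y)$, I set $r' = 1/r-1$ and define the analogs of $J_m$ inside the first $r'k$ coordinates:
\[
J_m^{(r)} \;=\; \bigl\{j\in [r'k] : |(j-A)_k\cap S| = m\bigr\} \;=\; J_m \cap [r'k].
\]
Exactly as in the rate-$1/2$ argument, the worst-case choice of the nonzero coefficients $b_s$ of $y$ forces $|J_1^{(r)}| \le \rweight(\alpha^* y) \le r'k - |J_0^{(r)}|$, so the hypothesis of \Cref{lem: equivalent condition r} reduces to the two analogs of \Cref{clm: J1 lower bound} and \Cref{clm: J0 lower bound}, namely $|J_0^{(r)}|,|J_1^{(r)}| \ge c(k)-w$ for every $1\le w\le c(k)$.

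The bound $|J_0^{(r)}| \ge r'k - wd$ goes through exactly as in \Cref{clm: J0 lower bound}: each $s\in S$ appears in $(j-A)_k$ for exactly $d$ values of $j\in[k']$, so at most $wd$ indices $j\in[r'k]$ satisfy $(j-A)_k\cap S \ne \emptyset$. This gives $|J_0^{(r)}| \ge c(k)-w$ whenever $c(k) = O(r'\sqrt{k})$, which turns out to be the non-binding constraint.

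The heart of the proof is the $|J_1^{(r)}|$ lower bound. The Sidon pair-counting argument of \Cref{lem: J2-Jw bound} restricted to $j\in[r'k]$ still yields $\sum_{m\ge 2}\binom{m}{2}|J_m^{(r)}|\le w(w-1)$, whence $|J_1^{(r)}| = T^{(r)} - \sum_{m\ge 2}m|J_m^{(r)}| \ge T^{(r)} - O(w^2)$, where $T^{(r)} = \sum_{s\in S} d_s^{(r)}$ and $d_s^{(r)}$ is the number of $a\in A$ with $(s+a)\bmod k' \in [r'k]$. The key new step is lower-bounding $d_s^{(r)}$. Note that $d - d_s^{(r)}$ equals $|A \cap I_s|$ for the cyclic interval $I_s = (-s+[r'k,k])\bmod k'$ of length $(1-r')k = (2-1/r)k$; since $A\cap I_s$ is itself a Sidon set contained in an interval of length $(2-1/r)k$, the Erd\H{o}s--Tur\'an--Lindstr\"om bound cited in the preliminaries ($\sqrt{L}+O(L^{1/4})$ on the maximum size of a Sidon set in $[0,L]$) yields
\[
d - d_s^{(r)} \;\le\; \sqrt{(2-1/r)k} + O(k^{1/4}).
\]
Summing over $s \in S$ and using $d = (1-o(1))\sqrt{k}$ gives $T^{(r)} \ge w\bigl(1-\sqrt{2-1/r}\bigr)\sqrt{k} - o(w\sqrt{k})$, hence $|J_1^{(r)}| \ge w\bigl(1-\sqrt{2-1/r}\bigr)\sqrt{k} - O(w^2) - o(w\sqrt{k})$.

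Substituting into the condition $c(k)-w \le |J_1^{(r)}|$ at the worst case $w=c(k)$ reduces to $c(k) \le \bigl(1-\sqrt{2-1/r}\bigr)\sqrt{k} - o(\sqrt{k})$ (up to a constant slack from the $O(w^2)$ term); this is tighter than $c(k)\le r'\sqrt{k}$ coming from the $|J_0^{(r)}|$ side for every $r>1/2$, so \Cref{lem: equivalent condition r} yields $\Delta(\CWErs) = \Omega\bigl((1-\sqrt{2-1/r})\sqrt{k}\bigr)$. The main obstacle I anticipate is the sharpness of the Sidon-in-interval estimate: the elementary argument $\binom{d'}{2}\le L$ from distinct pairwise differences only gives $|A\cap I_s|\le\sqrt{2L}$, which would replace $\sqrt{2-1/r}$ by the strictly larger $\sqrt{2(2-1/r)}$ and become vacuous already at $r=2/3$; the Erd\H{o}s--Tur\'an--Lindstr\"om refinement $\sqrt{L}+O(L^{1/4})$ is precisely what is needed to recover the full range $r\in(1/2,1)$.
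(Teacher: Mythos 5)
Your proposal is correct and follows essentially the same route as the paper: puncture-restricted sets $J_m^r$, the Sidon pair-counting bound from \Cref{lem: J2-Jw bound}, and Lindstr\"om's $\sqrt{L}+O(L^{1/4})$ estimate applied to $A$ intersected with a cyclic interval (this is exactly the content of the paper's \Cref{lem: j solutions}, \Cref{clm: J1r lower bound}, and \Cref{clm: J0r lower bound}), and you correctly flag that the elementary $\sqrt{2L}$ bound would not suffice. The one small deviation is that you bound $|J_0^r|$ by the crude $r'k - wd$ rather than the paper's Lindstr\"om-sharpened $r'k - w(\sqrt{r'k}+O(k^{1/4}))$; your observation that $1-\sqrt{2-1/r}\le 1/r-1$ on $(1/2,1)$ makes this constraint non-binding anyway, so the weaker bound is an acceptable simplification.
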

\begin{proof}
Note that it suffices to find $c(k)=\Omega\left(\left(1-\sqrt{2-\frac{1}{r}}\right)\sqrt{k}\right)$ which satisfies the condition of \Cref{lem: equivalent condition r}. Similar to the proof of \Cref{thm: rate 1/2 construction}, for any $y\in R$ with $\tweight(y)=w$, we write $y=\sum_{i=1}^w b_{s_i} x^{s_i}$ where $b_{s_i}\neq 0$ for all $1\leq i\leq w$ and denote $S=\{s_1,\ldots,s_w\}\subseteq [k']$.
Recall that the coefficients of the product $\alpha y=\sum_{j=0}^{k} c_j x^j$ are given by
$$c_j=\sum_{\substack{a\in A\\s\in S}}\indicator\{a+s\equiv j\pmod{k'}\}b_s=\sum_{s\in (j-A)_k\cap S}b_{s}.$$
We will denote
$$J_m^r=\{j\in [(\frac{1}{r}-1)k]\given |(j-A)_k\cap S|=m\}$$
the non-punctured indices $j$ such that $|(j-A)_k\cap S|$ has size $m$.
Since we have
$$|J_1^r|\leq \rweight(\alpha^* y)\leq (\frac{1}{r}-1)k-|J_0^r| \ ,$$
it would be sufficient to take $c(k)$ such that
$$c(k)-w\leq \min\{|J^r_0|,|J^r_1|\}$$
for all $w$.
We make the following claims on lower bounds of $|J_0^r|$ and $|J_1^r|$, whose proofs we defer to the end of this section.
\begin{claim}
\label{clm: J1r lower bound}
$J_1^r$ has size at least $w\left(d-\sqrt{(2-\frac{1}{r})k}-(2-\frac{1}{r})^{\frac{1}{4}}k^{\frac{1}{4}}\right)-2w^2$.
\end{claim}
\begin{claim}
\label{clm: J0r lower bound}
$J_0^r$ has size at least $(\frac{1}{r}-1)k-w\left(\sqrt{(\frac{1}{r}-1)k}+(\frac{1}{r}-1)^\frac{1}{4}k^\frac{1}{4}+1\right)$.
\end{claim}
Assuming the claims, it suffices to take $c(k)$ such that
\begin{align*}
    c(k) & \leq m_1:=\min_{1\leq w\leq c(k)}w\left(d-\sqrt{(2-\frac{1}{r})k}-(2-\frac{1}{r})^{\frac{1}{4}}k^{\frac{1}{4}}+1\right)-2w^2,\\
c(k) & \leq m_2:=\min_{1\leq w\leq c(k)}(\frac{1}{r}-1)k-w\left(\sqrt{(\frac{1}{r}-1)k}+(\frac{1}{r}-1)^\frac{1}{4}k^\frac{1}{4}\right).
\end{align*}
% We will denote 
% \begin{align*}
%     m_1:=& \min_{1\leq w\leq c(k)}w\left(d-\sqrt{(2-\frac{1}{r})k}-(2-\frac{1}{r})^{\frac{1}{4}}k^{\frac{1}{4}}\right)-w^2,\\
%     m_2 :=& \min_{1\leq w\leq c(k)}(\frac{1}{r}-1)k-w\left(\sqrt{(\frac{1}{r}-1)k}+(\frac{1}{r}-1)^\frac{1}{4}k^\frac{1}{4}\right).
% \end{align*}
Note that $m_1$ is the minimum of a quadratic polynomial in $w$ which is obtained either when $w=1$ or $w=c(k)$; $m_2$ is the minimum of a linear function in $w$ which is obtained when $w=c(k)$. So
\begin{align*}
    m_1=&\min\bigg\{d-\sqrt{(2-\frac{1}{r})k}-(2-\frac{1}{r})^{\frac{1}{4}}k^{\frac{1}{4}},\ 
    c(k)\left(-2c(k)+d-\sqrt{(2-\frac{1}{r})k}-(2-\frac{1}{r})^{\frac{1}{4}}k^{\frac{1}{4}}+1\right)\bigg\},\\
    m_2=&(\frac{1}{r}-1)k-c(k)\left(\sqrt{(\frac{1}{r}-1)k}+(\frac{1}{r}-1)^\frac{1}{4}k^\frac{1}{4}\right).
\end{align*}
We want $c(k)\leq m_1$ and $c(k)\leq m_2$. Therefore, we have the following restrictions on $c(k)$:
\begin{align*}
    c(k)\leq& \frac{1}{2}\left(d-\sqrt{(2-\frac{1}{r})k}-(2-\frac{1}{r})^{\frac{1}{4}}k^{\frac{1}{4}}\right)
    =\frac{1}{2}\left(1-\sqrt{2-\frac{1}{r}}-o(1)\right)\sqrt{k},\\
    c(k)\leq& \frac{(\frac{1}{r}-1)k}{\sqrt{(\frac{1}{r}-1)k}+(\frac{1}{r}-1)^\frac{1}{4}k^\frac{1}{4}+1}
    =(1-o(1))\sqrt{(\frac{1}{r}-1)k}.
\end{align*}
Recall that $d$ is the largest prime smaller than $\sqrt{k}$ and $d=(1-o(1))\sqrt{k}$ by \Cref{remark d}. Since $\frac{1}{2}(1-\sqrt{2-\frac{1}{r}})<\sqrt{(\frac{1}{r}-1)}$ for $r\in (\frac{1}{2},1)$, we can take $c(k)=\frac{1}{2}(d-\sqrt{(2-\frac{1}{r})k}-(2-\frac{1}{r})^{\frac{1}{4}}k^{\frac{1}{4}})$ and thus $\CWErs$ has minimum distance at least $c(k)=\Omega\left(\left(1-\sqrt{2-\frac{1}{r}}\right)\sqrt{k}\right)$.
\end{proof}
\subsection{Proofs of \Cref{clm: J1r lower bound} and \Cref{clm: J0r lower bound}}
To prove the two claims, we will first introduce a theorem by Erd\H{o}s and Tur\'{a}n \cite{erdos1941problem} which bounds the order of any Sidon set via its length.
\begin{theorem}[Erd\H{o}s and Tur\'{a}n \cite{erdos1941problem, lindstrom1972b2, obryant2022sizefinitesidonsets}]
\label{thm: erdos}
    For any Sidon set with length $m$, the order is at most $\sqrt{m}+m^{1/4}+1$.
\end{theorem}
A key observation is that for any $s\in S$, the shifted set $(s+A)_k$ is a Sidon set. Moreover, it is partitioned into two parts after puncturing: the remaining part $(s+A)_k\cap [(\frac{1}{r}-1)k]$ and the removed part  $(s+A)_k\cap ([k']\setminus [(\frac{1}{r}-1)k])$, where each part is itself a Sidon set. We can then apply \Cref{thm: erdos} to the two parts. In general, for any $m\leq k$, the following lemma bounds the number of index $j<m$ such that $s\in (j-A)_k$.
\begin{lemma}
    \label{lem: j solutions}
    For any $s\in [k']$ and $m\leq k$,
    $$d-(\sqrt{k-m}+(k-m)^{1/4}+2)\leq |[m]\cap (s+A)_k|\leq \sqrt{m}+m^{1/4}+1$$
\end{lemma}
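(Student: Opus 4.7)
The plan is to reduce both the upper and lower bounds to a single application of Lindström's theorem (\Cref{thm: lindstrom}), applied to $B := [m] \cap (s+A)_k$ and to its complement $B' := (s+A)_k \setminus [m]$ inside $(s+A)_k$. The preliminary observation is that $(s+A)_k$ is a translate of the Bose--Chowla Sidon set $A$ taken modulo $k'$, and consequently any subset of $(s+A)_k$ that sits inside an integer interval of length strictly less than $k'$ inherits an integer Sidon property from $A$.

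Granted that property, the upper bound is immediate. Since $B \subseteq \{0, 1, \ldots, m-1\}$, its length is at most $m-1$, so Lindström's theorem gives
\[
|B| \;\leq\; \sqrt{m-1} + (m-1)^{1/4} + 1 \;\leq\; \sqrt{m} + m^{1/4} + 1.
\]

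For the lower bound, I would run the same argument on $B' \subseteq \{m, m+1, \ldots, k\}$. Translating by $-m$ preserves pairwise differences, so $B' - m$ is an integer Sidon set contained in $\{0, 1, \ldots, k-m\}$. Lindström gives $|B'| \leq \sqrt{k-m} + (k-m)^{1/4} + 1$, and since $|(s+A)_k| = d$ we conclude
\[
|B| \;=\; d - |B'| \;\geq\; d - \sqrt{k-m} - (k-m)^{1/4} - 1 \;\geq\; d - \bigl(\sqrt{k-m} + (k-m)^{1/4} + 2\bigr).
\]

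The main obstacle is the preliminary step: verifying that $B$ and $B' - m$ really are integer Sidon sets. The subtle point is that reducing $s + A$ modulo $k'$ could in principle identify two distinct integer differences of $A$ into a single difference appearing in $(s+A)_k$, producing a collision that is not directly governed by the integer Sidonness of $A$. To rule this out, one writes each $b \in B$ as $b = s + a_i - \epsilon_i k'$ with $\epsilon_i \in \{0,1\}$, and performs a short case analysis on the wrap-around variables: using that the length of $A$ is at most $d^2 - 2 < k'$ and that differences in $B$ are bounded by $m - 1 < k'$, any hypothetical collision $b_1 - b_2 = b_3 - b_4$ with distinct pairs forces a nontrivial integer equality $a_{i_1} - a_{i_2} = a_{i_3} - a_{i_4}$, which the integer Sidon property of $A$ forbids. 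Once this reduction is in place, both halves of the lemma follow uniformly.
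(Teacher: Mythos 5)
Your overall plan (reduce both bounds to Lindstr\"om's theorem via one arc and its complement inside the translate of $A$) is the same as the paper's, and the arithmetic you do with Lindstr\"om is fine. The gap is in the ``preliminary step,'' which you yourself flag as the main obstacle: your sketched case analysis does \emph{not} establish that $B=[m]\cap (s+A)_k$ (or $B'-m$) is an integer Sidon set.

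Write $b_i = s + a_{j_i} - \epsilon_i k'$ with $\epsilon_i\in\{0,1\}$. A collision $b_1-b_2=b_3-b_4$ unwinds to
$a_{j_1}-a_{j_2}-a_{j_3}+a_{j_4}=(\epsilon_1-\epsilon_2-\epsilon_3+\epsilon_4)\,k'$.
Your claim is that the magnitude constraints force the right-hand side to vanish, yielding an integer Sidon collision in $A$. But the right-hand side need not vanish: since $A\subseteq[d^2-1]$ with $d$ the largest prime below $\sqrt{k}$, one has $d^2=(1-o(1))k$, so the left-hand side can be as large as $2(d^2-2)=(2-o(1))k > k'$. Thus $\epsilon_1-\epsilon_2-\epsilon_3+\epsilon_4=\pm 1$ is not ruled out, and in that case you only get $a_{j_1}-a_{j_2}=a_{j_3}-a_{j_4}\pm k'$ --- an equation that the integer Sidon property of $A$ says nothing about, and which also does not reduce to a collision modulo $d^2-1$ (the modulus of the Bose--Chowla construction) since $k'\not\equiv 0\ (\mathrm{mod}\ d^2-1)$. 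Concretely, the wrap-around case is live whenever $m>s\geq k'-d^2+2$, i.e.\ $m\gtrsim 2(k-d^2)$, which is $o(k)$ here, so it is very much in play for the values $m=\Theta(k)$ used in \Cref{thm: higher rate construction}.

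The paper sidesteps this by translating \emph{first}: $|[m]\cap(s+A)_k|=|([m]-s)_k\cap A|$, and then intersecting $A$ with $B:=([m]-s)_k\cap[d^2-1]$. Since $A\subseteq[d^2-1]$ there is no dependence on the $k'$-wrap-around after this step; $B$ is an arc in $\Z/(d^2-1)$, and $A\cap B$ inherits the Sidon property \emph{modulo $d^2-1$} (which is the modulus that is actually meaningful for $A$), so Lindstr\"om applies cleanly. To repair your argument you would need to replace the integer Sidon property of $A$ by the mod-$(d^2-1)$ one \emph{and} reorganize so that the wrap-around is taken modulo $d^2-1$ rather than modulo $k'$ --- which is exactly the effect of the paper's preliminary translation.
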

\begin{proof}
Translating the sets by $s\mmod{k}$, we have 
$$|[m]\cap (s+A)_k|=|([m]-s)_k\cap A|.$$
Observe that $A\subseteq [d^2-1]$ and $B:=([m]-s)_k\cap [d^2-1]$ are sets of consecutive numbers modulo $d^2-1$ with size at most $m$. Since $A$ is a Sidon set modulo $d^2-1$, $A\cap B$ is also a Sidon set modulo $d^2-1$ of length at most $m$. By \Cref{thm: erdos}, the order of $A\cap B$ is at most $\sqrt{m}+m^{1/4}+1$. So
$$|[m]\cap (s+A)_k|=|A\cap B|\leq \sqrt{m}+m^{1/4}+1.$$
On the other hand, let $C:=([k']\setminus[m]-s)_k\cap [d^2-1]$. Then $C$ is a set of consecutive numbers modulo $d^2-1$ with size at most $k'-m$. Similarly, we have $A\cap C$ a Sidon set modulo $d^2-1$ of length at most $k'-m=k-m+1$ and order at most $\sqrt{k-m}+(k-m)^{1/4}+2$. Since $|A\cap B|+|A\cap C|=d$, it follows that
$$d-(\sqrt{k-m}+(k-m)^{1/4}+2)\leq |[m]\cap (s+A)_k|\leq \sqrt{m}+m^{1/4}+1.\qedhere$$
\end{proof}
We can now prove \Cref{clm: J1r lower bound} and \Cref{clm: J0r lower bound}:
\begin{C3}
    $J_1^r$ has size at least $w\left(d-\sqrt{(2-\frac{1}{r})k}-(2-\frac{1}{r})^{\frac{1}{4}}k^{\frac{1}{4}}\right)-2w^2$.
\end{C3}
\begin{proof}
Consider the bipartite graph $(S,[(\frac{1}{r}-1)k],E)$ where there is an edge between $s\in S$ and $j\in [(\frac{1}{r}-1)k]$ if and only if $j\equiv s+a\pmod{k'}$ for some $a\in A$. Then $J_m^r$ corresponds to the right vertices $j\in [(\frac{1}{r}-1)k]$ with degree $m$. Apply \Cref{lem: j solutions} with $m=(\frac{1}{r}-1)k$, each left vertex has degree at least $d-\sqrt{(2-\frac{1}{r})k}-(2-\frac{1}{r})^{\frac{1}{4}}k^\frac{1}{4}-2$. The total number of edges $|E|$ is at least $w\left(d-\sqrt{(2-\frac{1}{r})k}-(2-\frac{1}{r})^{\frac{1}{4}}k^\frac{1}{4}-2\right)$. On the other hand, $|E|=\sum_{m=1}^w |J_m^r|$. Therefore we can write:
    $$|J_1^r|=|E|-\sum_{m=2}^{w} m|J_m^r|\geq w(d-\sqrt{(2-\frac{1}{r})k}-(2-\frac{1}{r})^{\frac{1}{4}}k^\frac{1}{4}-2)-\sum_{m=2}^{w} m|J_m^r|.$$
    By \Cref{lem: J2-Jw bound},
    $$\sum_{m=2}^{w} m|J_m^r|\leq \sum_{m=2}^{w} m|J_m|\leq 2w(w-1),$$
    which completes the proof.
\end{proof}
\begin{C4}
    $J_0^r$ has size at least $(\frac{1}{r}-1)k-w\left(\sqrt{(\frac{1}{r}-1)k}+(\frac{1}{r}-1)^\frac{1}{4}k^\frac{1}{4}+1\right)$.
\end{C4}
\begin{proof}
Apply \Cref{lem: j solutions} with $m=(\frac{1}{r}-1)k$, for each $s\in S$, there at most $\sqrt{(\frac{1}{r}-1)k}+(\frac{1}{r}-1)^{\frac{1}{4}}k^\frac{1}{4}+1$ many $j\in [(\frac{1}{r}-1)k]$ such that $s\in (j-A)_k$. So overall there are at most $|S|(\sqrt{(\frac{1}{r}-1)k}+(\frac{1}{r}-1)^{\frac{1}{4}}k^\frac{1}{4}+1)$ many $j\in [(\frac{1}{r}-1)k]$ such that $|(j-A)_k\cap S|>0$, which gives the claimed lower bound on $|J_0^r|$.
%$$|J_0^r|\geq (\frac{1}{r}-1)k-w\left(\sqrt{(\frac{1}{r}-1)k}+(\frac{1}{r}-1)^\frac{1}{4}k^\frac{1}{4}+1\right). \qedhere $$
\end{proof}

\section{Open Questions}
\label{sec:open}
It is well known that Wozencraft ensemble codes $\CWE$ satisfy the Gilbert-Varshamov bound for most $\alpha$. Concretely,
$$
    \lim_{k\rightarrow \infty}\Pr_{\alpha\in S}\left[\Delta(\CWE)\geq d(k)\right]=1,
$$
where $S=\field_{q^k}^*,\ d(k)=\left(h_q^{-1}(\frac{1}{2})-\epsilon\right)\cdot 2k
$ with $h_q$ the $q$-ary entropy function and $\epsilon>0$ chosen arbitrarily. In this paper, we have proposed a construction of $S=\{\alpha^*\}$ such that the equation above holds with $d(k)=\Omega(\sqrt{k})$. It is of natural interest to reduce the size of the ensemble and find $S,d(k)$ satisfying the equation above with $|S|$ small and $d(k)$ large. For example, can one construct $S$ such that the equation above is satisfied via $|S|=O(2^{o(k)})$ and $d(k)=\Omega(k)$, or $|S|=\text{poly}(k)$ and $d(k)=\Omega(k^c)$ with $c>\frac{1}{2}$? In addition, are there any barriers to such constructions, as in, would such constructions imply progress on some other explicit construction challenge?

In 1973, Weldon \cite{weldon1973lowrate} proposed an ensemble of codes that generalizes Wozencraft ensemble codes.  The Weldon ensemble  was used by him, and later Shen~\cite{shen1993justesen}, to construct explicit concatenated codes achieving the Zyablov bound for rates less than $0.3$, thus improving upon Justesen codes \cite{justesen1972class} for low rates. Weldon codes\footnote{The ensemble codes which Weldon \cite{weldon1973lowrate} proposed can have any rate of $n_1/(n_1+n_2)$ with $n_1,n_2$ positive integers. However, he only used codes of rate $1/(t+1)$ to construct the concatenated code.} of rate $1/(t+1)$ are indexed by $\alpha_1,\alpha_2,\ldots,\alpha_t\in\field_{q^k}$ and defined as
$$C_{\text{WE}}^{\alpha_1,\ldots,\alpha_t}=(\varphi(x),\varphi(\alpha_1 x),\ldots,\varphi(\alpha_t x)).$$
For some fixed $t > 2$, can one find explicit $\alpha_i$'s where the distance of $C_{\text{WE}}^{\alpha_1,\ldots,\alpha_t}$ is asymptotically larger than $\sqrt{k}$?

\bibliography{main}
\bibliographystyle{alpha}
\end{document}